\numberwithin{equation}{section} \makeatletter
\newtheorem{prop}{Proposition}[section]
\renewcommand{\tilde}{\widetilde}
\renewcommand{\hat}{\widehat}
\newcommand{\bref}[1]{\textbf{\ref{#1}}}
\newcommand{\gh}[1]{\mathrm{gh}(#1)}
\newcommand{\dd}{\partial}
\renewcommand{\d}{\partial}
\newcommand{\inner}[2]{\langle #1{,}\,#2\rangle}
\newcommand{\binner}[2]{%
  {\langle}\kern-4.15pt{\langle}#1{,}\,#2{\rangle}\kern-4.15pt{\rangle}}
\newcommand{\commut}[2]{[#1{,}\,#2]}
\newcommand{\half}{\mathchoice{%
    \ffrac{1}{2}}{\frac{1}{2}}{\frac{1}{2}}{\frac{1}{2}}}
\newcommand{\ffrac}[2]{\raisebox{.5pt}%
  {\footnotesize$\displaystyle\frac{#1}{#2}$}\kern1pt}
\newcommand{\derham}{\boldsymbol{d}}
\newcommand{\dl}[1]{\mathchoice{\ffrac{\dd}{\dd #1}}{\frac{\dd}{\dd
      #1}}{\ffrac{\dd}{\dd #1}}{\ffrac{\dd}{\dd #1}}}
\newcommand{\st}[2]{{\overset{#1}{#2}}}
\newcommand{\ddl}[2]{\ffrac{\dd #1}{\dd #2}}
\newcommand{\vddl}[2]{{\ffrac{\delta #1}{\delta #2}}}
\newcommand{\manifold}[1]{\mathscr{#1}}
\newcommand{\manM}{\manifold{M}}
\newcommand{\manJ}{\manifold{J}}
\newcommand{\Liealg}{\mathfrak} 
\newcommand{\algg}{\Liealg{g}}
\def\cE{\mathcal{E}}
\def\cF{\mathcal{F}}
\def\cH{\mathcal{H}}
\def\cJ{\mathcal{J}}
\def\cL{\mathcal{L}}
\def\BG-Poincare{Barnich:2009jy}
\def\Fedosov-book{Fedosov:1996fu}
\newcommand{\cprime}{\/{\mathsurround=0pt$'$}}
\newcommand\Hor{\mathbf{H}}
\newcommand\dv{d_{\mathrm{v}}}
\renewcommand\dh{d_{\mathrm{h}}}
\renewcommand\DH{D_{\mathrm{h}}}
\newcommand\DV{D_{\mathrm{v}}}
\begin{document}
\begin{flushright}\small
FIAN-TD-2016-16
\end{flushright}

\begin{centering}

  \vspace{1cm}
   \thispagestyle{empty}
  \textbf{\Large{Presymplectic structures and intrinsic Lagrangians}}

  \vspace{1cm}

  {\large Maxim Grigoriev} 

 \vspace{1cm}
 
 \begin{minipage}{.9\textwidth} \it \begin{center}
    Tamm Theory Department, Lebedev Physics
    Institute,\\ Leninsky prospect 53,  119991 Moscow, Russia\\
    \vspace{0.5cm}   
Moscow Institute of Physics and Technology, Dolgoprudny,\\
141700 Moscow region, Russia  
    \end{center}
 \end{minipage}

\end{centering}

\vspace{1.5cm}
\begin{abstract}
%
%
%
%
%
   
    It is well-known that a Lagrangian induces a compatible presymplectic form on the equation manifold (stationary surface, understood as a submanifold of the respective jet-space). Given an equation manifold and a compatible presymplectic form therein, we define the first-order Lagrangian system which is formulated in terms of the intrinsic geometry of the equation manifold. It has a structure of a presymplectic AKSZ sigma model for which the equation manifold, equipped with the presymplectic form and the horizontal differential, serves as the target space. For a wide class of systems (but not all) we show that if the presymplectic structure originates from a given Lagrangian, the proposed first-order Lagrangian is equivalent to the initial one and hence the Lagrangian \textit{per se} can be entirely encoded in terms of the intrinsic geometry of its stationary surface. If the compatible presymplectic structure is generic, the proposed Lagrangian is only a partial one in the sense that its stationary surface contains the initial equation manifold but does not necessarily coincide with it.

%
%
%
%
%
%
%
%
%
%
%
%
%
%

\end{abstract}

\vspace{1cm}
\newpage

{
\tableofcontents
}

\section{Introduction}

Most of the theories of fundamental interactions are naturally Lagrangian theories. Although classical field dynamics can be described at the level of equations of motion, the Lagrangian (or its substitute) is inevitable at the quantum level.  Even classically, interactions are best described in the Lagrangian terms.  Moreover, the existence of a Lagrangian description is often considered as an important selection criteria that a candidate theory ought to satisfy. 

More practically, a typical question often met in applications is whether
the given equations of motion are Lagrangian (=variational) or not. This is known as 
the inverse problem of variational calculus 
(see e.g.~\cite{Henneaux:1982iw,Henneaux:1984ke,Anderson1988} for 
the introduction and original references). 
In its simplest version the question is whether a given system of partial differential 
equations (PDE) is the Euler-Lagrange equations derived from a local  Lagrangian.
Less trivial is the problem (known as the multiplier problem) whether the equation
is defined as a submanifold of a given jet-space. A more general question is whether a given PDE can be equivalently  reformulated as a Lagrangian one by performing a local invertible change of the variables
and/or by adding/eliminating so-called auxiliary fields. 

The difficulty in searching for a Lagrangian is that, on the one hand, the Lagrangian is defined on the jet-space (the space of all the dependent variables and their space-time derivatives, seen as independent coordinates), while, on the other hand, there is a huge ambiguity in realizing a given PDE in terms of one or another set of dependent variables so that apparently it is not clear which particular realization has a chance to be Lagrangian and which does not. A typical example is provided by the equations of motion of the massive spin-$2$ field whose Lagrangian formulation~\cite{Fierz:1939ix} requires introducing an auxiliary field.

A natural step is to try to formulate the problem in the invariant terms (=independent of
the particular embedding). In the invariant approach to PDE~\cite{Vinogradov1981} (for a modern exposition see e.g.~\cite{Bocharov:1999,Krasil'shchik:2010ij}), which is well-known by now, a PDE is defined as a manifold equipped with the Cartan distribution or, in more down-to-earth
terms, with a certain set of commuting vector fields. This manifold can be 
arrived at starting from a concrete realization of the PDE as a surface in the 
jet-space, singled out by the equations and their differential consequences. In 
so doing, the commuting vector fields determining the distribution are simply the 
total derivatives restricted to the surface. In contrast to the total derivatives,
the naive restriction of the Lagrangian to its stationary surface does not have 
much meaning and hence can not encode the Lagrangian formulation.

%

There is, however, a well-defined geometric structure that the Lagrangian does determine on 
the equation manifold. This is the so-called canonical presymplectic structure: a 
closed and conserved $(n-1,2)$-form (i.e. $n-1$-horizontal and $2$-vertical; $n$ 
stands for the space-time dimension) on the equation manifold. It was thoroughly 
discussed in the context of the covariant phase-space approach~\cite{Crnkovic:1986ex,Zuckerman:1989cx,Lee:1990nz,Barnich:1991tc}.
In the case of 1 space-time dimension, this presymplectic structure becomes a 
usual (pre)symplectic form which is nondegenerate if gauge symmetries are not 
present. This was shown to characterize variational equations in 1d and to 
encode the respective Lagrangian~\cite{Henneaux:1982iw}.

An attempt to generalize this to PDE was made by Khavkin~\cite{Khavkine2012}, 
based on the earlier important developments of~\cite{Bridges2009,Hydon:2005}. It was 
demonstrated that given a concrete realization of a PDE, any compatible presymplectic structure can be lifted to a Lagrangian whose stationary surface contains the equation manifold of the initial PDE.  However, this construction depends on an apparently arbitrary choice of the explicit realization.

An independent construction of a Lagrangian in terms of a presymplectic 
structure was proposed in~\cite{Alkalaev:2013hta} in the context of the 
super-geometrical description of gauge theories. In particular, it was 
demonstrated that given a manifold equipped with a presymplectic form compatible 
with a homological vector field, this data determines a natural Lagrangian in 
terms of the field, taking values in the manifold. In this way one can naturally 
reformulate nearly any Lagrangian gauge system, giving a geometrical setup for 
the so-called frame-like formulations well-known in the literature. This 
construction is deeply related to the BRST-BV formalism for gauge 
theories~\cite{Batalin:1983wj,Batalin:1984jr} (a useful pedagogical exposition 
can be found in e.g.~\cite{HT-book}) and can be seen as a presymplectic 
generalization~\cite{Alkalaev:2013hta} of the familiar AKSZ sigma model 
approach~\cite{Alexandrov:1995kv}.

In this work we demonstrate that the adapted version of the Lagrangian proposed in~\cite{Alkalaev:2013hta} can be defined for a generic equation manifold equipped with a compatible presymplectic structure. A remarkable feature of the construction is that the Lagrangian (called ``intrinsic'' henceforth) is built in terms of the intrinsic geometry of the equation manifold and does not refer to any particular realization of the equation. This is so because the dependent variables
for the intrinsic Lagrangian are the coordinates of the equation manifold itself. 
However, not all the coordinates give rise to genuine fields for the intrinsic Lagrangian because those on which the Lagrangian does not actually depend are interpreted as pure gauge ones and are to be eliminated, resulting in the formulation with finite number of dependent variables.
Note that formulating a given PDE in such a way that dependent variables are coordinates
on the stationary surface underlies the so-called unfolded formalism~\cite{Vasiliev:1980as,Lopatin:1987hz,Vasiliev:2005zu}, originally developed in the theory of higher spin fields.


It turns out that the intrinsic Lagrangian is in general only a partial one in the sense that 
its equations of motion are consequences of the original equations. However, we show that for a wide class of theories including, for instance, Yang-Mills theories and Einstein gravity, the intrinsic Lagrangian built out of  (a properly chosen representative for) the canonical presymplectic structure is equivalent to the initial one. It is important to note that 
not all physically interesting systems belong to this class. For instance massive spin-2 field (as well as massive higher spins) does not belong.


\section{Presymplectic form on the stationary surface}
\subsection{Jet-bundle and variational bicomplex}
Now we recall the basic notions of jet-bundle and variationsl calculus. Further details can be found
in e.g.~\cite{Krasil'shchik:2010ij,Bocharov:1999,Andersonbook}.

Without trying to be maximally general let us concentrate on a system of PDE with dependent variables $\phi^i$
and independent variables $x^a$, $a=1,\ldots,n$.  More geometrically, the starting point is the bundle $\cF$ over the space time (where $x^a$ are local coordinates) and whose fibres are cordinatized by $\phi^i$. For simplicity, we always work locally and avoid any global geometry subtleties.

The associated jet-bundle $\manJ=J^{\infty}(\cF)$ can be coordinatized by $x^a,\phi,\phi_a,\phi_{ab}, \ldots$. It is equipped with the total derivative 
\begin{equation}
 \d^T_a=\dl{x^a}+\phi^i_a\dl{\phi^i}+\phi^i_{ab}\dl{\phi^i_a}+\ldots
\end{equation} 
A Local form (function) $\alpha[\phi]$ on $\manJ$ is a differential form that can be represented as a pullback from $J^k(\cF)$ (finite-order jet-bundle) i.e. it depends on only a finite number of the coordinates. The exterior algebra $\Omega(\manJ)$ of local forms is equipped with the horizontal differential $\dh=dx^a\d^T_a$. The complementary differential $\dv\equiv d-\dh$ is called vertical. A generic local form can be decomposed into homogeneous ones of the form
\begin{equation}
 \alpha_{r,s}=\alpha_{a_1\ldots a_r}^{{I_1}\ldots I_s }[\phi] \dv \phi_{I_1}\ldots \dv \phi_{I_s} dx^{a_1}\ldots dx^{a_1} \,.
\end{equation} 
$\alpha_{r,s}$ is refereed to as  $(r,s)$-form ($s$-vertical and $r$-horizontal). Here ${I}$ stands for the multi-index of $\phi^i,\phi^i_a,\phi^i_{ab},\ldots$. This bigrading of $\Omega(\manJ)$ makes it into the bicomplex,
called variational bicomplex. The two differentials are $\dh$ and $\dv$. Note that 
\begin{equation}
\begin{gathered}
 \dh \dv+\dv \dh=0\,, \quad \dv^2=\dh^2=0\,, \quad \dh \dv \phi_{I}=dx^a \dv \phi_{aI}\,, \\ \dv \phi_{I}=(d-\dh)\phi_I=d\phi_I-dx^a \phi_{aI}\,,
\end{gathered}
\end{equation}
where $\phi_{aI}\equiv \d^T_a \phi_I$. Vertical forms vanish on total derivatives, i.e. $(\dv f)(\d^T_a)=0$.

To any $(n,l)$-form $\alpha$ one can associate its $\dh$ cohomology class $I\alpha$. More precisely,
such form is automatically $\dh$-closed (because $n$ is the space-time dimension) and hence is a representative of a $\dh$-cohomology class. It is convenient to chose a representative such that $I\alpha=\dv\phi^i\alpha_i$ for some $(n,l-1)$-forms $\alpha_i$.
$(n,l)$-forms considered modulo $\dh$-exact ones are called functional forms. The vertical
differential determines a so-called Euler operator $\delta^E=I\dv$ on functional forms. It is easy to check that $\delta^E\delta^E=0$, $\delta^E \dh=0$, and $I\dh=0$.

Among the vector fields on $\manJ$ an important subalgebra is formed by evolutionary vector fields. These are vertical (i.e. annihilating $x^a$) vector fields commuting with $\dh$ (or, equivalently with $\d_a^T$). Any evolutionary vector field is determined by its action on undifferentiated variables. A collection of local functions $f^i=f^i[\phi]$ gives rise to a unique evolutionary vector field $E_f$ such that $[\dh,E_f]=0$ and $E_f\phi^i=f^i$.

A system of partially differential equations (PDE) is a collection of local functions $E_\alpha[\phi]$ satisfying certain regularity assumptions. Together with all their total derivatives functions $E_\alpha$ determine a surface (called equation manifold or stationary surface) $\manM$ in $\manJ$.
More precisely, the surface is determined by 
\begin{equation}
 \d^T_{a_l}\ldots \d^T_{a_l} E_\alpha=0\,, \qquad l=0,1,2,\ldots
\end{equation} 
understood as the algebraic equations in $\manJ$. Because $\d^T_a$ preserves the ideal generated by the prolonged PDE, $\d^T_a$ is tangent to $\manM$ and hence restricts to $\manM$. It follows that $\dh$ restricts to $\Omega(\manM)$,  the algebra of local forms to $\manM$. Because $d$ descends to $\Omega(\manM)$ as well so does the vertical differential $\dv$.

In what follows we always assume that the equation does not constrain independent variables. More formally, just like jet-bundle itself $\manM$ is a bundle over space-time manifold. Two PDE are considered equivalent if the respective equation manifolds  $\manM$ and $\manM^\prime$ are isomorphic and the isomorphism sends $\dh$ on $\manM$ to $\dh$ on $\manM^\prime$. This justifies that a PDE can be defined
as a pair $(\manM,\dh)$.

A given PDE $(\manM,\dh)$ can be explicitly realized as an explicit system of 
PDE using one or another jet-bundle. There is however a somewhat distinguished 
realization, where the jet bundle is naturally determined by the equation 
manifold itself.  We discuss this realization in Section~\bref{sec:intr-embedd}.

%
%
%
%

\subsection{Lagrangian and the presymplectic structure}

The standard understanding of a variational PDE is as follows: equation $\manM \subset \cJ$,
where $\manM$ is understood as a submanifold of a given jet-bundle $\cJ$, is called variational
if there exist a local $(n,0)$-form $\cL=L[x,u](dx)^n$ such that the prolongation of
\begin{equation}
\label{EL}
 E_i=\ddl{\cL}{\phi^i}-\d^T_a\ddl{\cL}{\phi^i_a}+\d^T_a\d^T_b\ddl{\cL}{\phi^i_{ab}}-\ldots
\end{equation} 
determines $\manM$. The right-hand-side defines the Euler-Lagrange (EL) derivative of $\cL$.
Note that in the formulation where the equations are explicitly variational the number of equations coincides with the number of dependent variables. Here and below we employ the following useful notations:
\begin{equation}
(dx)^{n-k}_{a_1\ldots a_k}\equiv \frac{1}{(n-k)!}\epsilon_{a_1\ldots a_k c_1 \ldots c_{n-k}}
dx^{c_1}\ldots dx^{c_{n-k}}\,.
\end{equation}
Note the relation $dx^c (dx)^{n-k}_{a_1\ldots a_k}=(dx)^{n-k+1}_{[a_1\ldots a_{k-1}}\delta^c_{a_{k]}}$, where $[~]$ denotes the total antisymmetrization of the enclosed indices.

Given a variational PDE determined by the Lagrangian $\cL$ the naive restriction of $\cL$ to the equation manifold $\manM$ does not make much sense. However, the Lagrangian does determine an exact
$(n-1,2)$-form $\sigma$ on $\manM$ in a natural way. More precisely, one first defines an $(n-1,2)$-form $\hat\chi$ on the $\cJ$ by
\begin{equation}
\dv \cL=\dv \phi^i E_i - \dh\hat\chi\,,
\end{equation} 
where $E_i$ are the EL equations~\eqref{EL}. That $\hat\chi$ exists follows e.g. from the explicit structure of $E_i$. Then one takes presymplectic form $\hat\sigma$ to be $\hat\sigma=\dv \hat\chi$. Forms $\hat\chi$ and $\hat\sigma$ pulled back to $\manM$ are denoted by $\chi$ and $\sigma$ respectively.
It turns out (see e.g.~\cite{Khavkine2012}) that on the equation
\begin{equation}
 \dv\sigma=\dh\sigma=0\,.
\end{equation} 
Indeed, $\dv E_i$ pulled back to $\manM$ vanishes (because $dE_i$ clearly does so and $\dh E_i=0$) and hence $i^*_\manM(\dv \dh \hat\chi)=0$, where $i^*_\manM$  is the pull-back map. It follows $0=i^*_\manM(\dv \dh \hat\chi)=-i^*_\manM(d \dv \hat\chi)=-d \sigma$ which in turn implies $\dh\sigma=\dv\sigma=0$.

If instead of $L$ we started with $L^\prime=L+\dh \alpha$ this would result in $\hat\chi^\prime=\hat\chi-\dv\alpha$ and the same $\hat\sigma$ so that adding total derivative to $L$ doesn't affect $\sigma$. The form $\hat\chi$ is defined modulo $\dh$-exact. For $\hat\sigma$
this gives $\hat\sigma \sim \hat\sigma+\dv \dh \beta$ for some $n-2,1$-form $\beta$. Pulling this back to $\manM$ gives 
$\sigma \sim \sigma + \dv \dh (\beta|_\manM)$ (because $(\dh\beta)|_\manM=\dh(\beta|_\manM)$.

Let us explicitly compute $\hat\chi$ in the example of $\cL=L(\phi,\phi_a,\phi_{ab})(dx)^n$. One has
\begin{multline}
 \dv \phi^i E_i -\dv \cL=\\
 =\dv \phi^i(\ddl{L}{\phi}-\d^T_a\ddl{L}{\phi^i_a}+\d^T_a\d^T_b\ddl{L}{\phi^i_{ab}})(dx)^n-
 (\dv\phi^i \ddl{L}{\phi^i}+\dv \phi^i_a\ddl{L}{\phi^i_a}+ \dv \phi^i_{ab}\ddl{L}{\phi^i_{ab}})(dx)^n=\\
 =\dh\hat\chi=\dh\left(\left((\ddl{L}{\phi^i_a}- \d^T_b\ddl{L}{\phi^i_{ab}})\dv\phi^i+ \ddl{L}{\phi^i_{ab}}\dv\phi^i_b\right)(dx)^{n-1}_a\right)\,,
\end{multline} 
so that
\begin{equation}
 \hat\chi=\left((\ddl{L}{\phi^i_a}- \d^T_b\ddl{L}{\phi^i_{ab}})\dv\phi^i+ \ddl{L}{\phi^i_{ab}}\dv\phi^i_b\right)(dx)^{n-1}_a\,.
\end{equation} 
The generaliztaion to higher derivative Lagrangians is straitforward. 

In most of the application $L=L(x,\phi,\phi_a)$. In this case $\hat \chi $ and $\hat\sigma$ read explicitly as
\begin{equation}
\hat\chi=\dv \phi^i \ddl{L}{\phi^i_a} (dx)^{n-1}_a\,, \quad 
\hat\sigma=\dv \phi^i_a(\ddl{L}{\phi^i_a\d \phi^j}\dv\phi^j (dx)^{n-1}_a 
+
\ddl{L}{\phi^i_a\d \phi^j_b}\dv\phi^j_b \dv \phi^i)\,.
\end{equation}


Being closed, the presymplectic structure on $\manM$ should be exact $\sigma=dA$ for some form $A$. On $\manJ$ one has
$\hat\sigma=\dv \hat \chi=d\hat\chi-\dh\hat\chi=d\hat\chi-\dv\phi^i E_i + d \cL=d(\hat\chi+\cL)-\dv\phi^i E_i$.
By pulling back this equality to $\manM$ one gets
\begin{equation}
 \sigma=d(\chi+\cL|_\manM)\,.
\end{equation}

%
Following,~\cite{Khavkine2012}, for a generic equation $(\manM,\dh)$ we call presymplectic structure $\sigma$ compatible if $\dh\sigma=\dv\sigma=0$. 
The equation equipped with a compatible presymplectic structure is denoted by $(\manM,\dh,\sigma)$.

\subsection{Intrinsic embedding of a PDE}
\label{sec:intr-embedd}

Suppose we are given with an equation $(\manM,\dh)$ given in the intrinsic terms i.e. there is a manifold $\manM$ with coordinates $\psi^A,x^a$ equipped with $\dh=dx^a\d_a+dx^a Y_a$ where $Y^a=Y^B_a(\psi)\dl{\psi^B}$ such that $\dh^2=0$. Recall, that by assumption  $\manM$ is a bundle over the space of independent variables $x^a$. It is assumed that $(\manM,\dh)$ can be embedded into some jet-bundle but neither bundle nor the embedding is specified.

Starting from $(\manM,\dh)$ one can define an explicit realization of this equation.
Before giving an invariant definition let us first present a component one. To this end let us promote
all the coordinates $\psi^A$ on $\manM$ to the fields $\psi^A(x)$ of a new system with the same independnet variables $x^a$, and subject them to the following equations
\begin{equation}
\label{unfold-f}
\derham\psi^A(x)-(\dh \psi^A)(x)=0\,, \qquad \derham\equiv dx^a\dl{x^a}\,.
\end{equation} 
Here and below by $A(x)$ we denote a local horizontal form $A=A(\psi,x,dx)$ evaluated at $\psi^A=\psi^A(x)$. Note that space-time derivatives of $\psi^A$ enter only through $\derham \psi^A(x)$ because $\dh\psi^A=dx^a Y^A_a(\psi)$. The above equation is in fact equivalent to the starting point one. The idea to promote coordinates on the equation manifold of a given PDE to fields of a natural first-order reformulation of the same PDE underlies the so called unfolded formalism~\cite{Vasiliev:1980as,Lopatin:1987hz,Vasiliev:2005zu}. In particular, equations of the form~\eqref{unfold-f} are known as unfolded ones (note though that strictly speaking in contrast to the unfolded formulation in the present setting all the fields $\psi^A(x)$ are zero forms, even if the system has gauge symmetries).

To describe the above realization in a more invariant terms let us consider a new jet-space $J^\infty(\manM)$, namely the jet-bundle of the bundle $\manM$. In terms of coordinates, the new jet-space is coordinatized by $x^a,\psi^A_a,\psi^A_{ab},\ldots$. Let us stress that the number of dependent coordinates is infinite but as we are going to see only finite number of them are involved in the construction.

On the new jet space one defines a horizontal differential $\DH$ (we use different notation not to confuse with $\dh$) in a usual way
\begin{equation}
 \DH x^a=dx^a\,, \quad \DH \psi^A=dx^a \psi^A_a\,, \quad \DH \psi^A_a=dx^b \psi^A_{ab}\,, \quad \ldots\,.  
\end{equation} 
In the new jet-space consider an equation manifold determined by the prolonongation of
\begin{equation}
\label{unfold}
\DH\psi^A=\dh\psi^A \,.
\end{equation} 
It turns out that this equation manifold is isomorphic to the starting point one. Indeed, it is easy to see that $x^a,\psi^A$ can be chosen as coordinates on this manifold so that it can be identified with the original one while the above equations merely say that the horizontal differentials do coincide (a proof based on the use of Koszule-Tate differential
was given in~\cite{Barnich:2010sw}; the case of linear equations was alredy in~\cite{Barnich:2004cr})). 

\section{Intrinsic Lagrangian}

\subsection{Construction}
As we have just seen the equation $\manM$ can be embedded into the new jet-space $J^\infty(\manM)$ naturally build in terms of $\manM$ itself. It turns out that given a compatible presymplectic form $\sigma$ on $\manM$ there is a natural first-order Lagrangian defined on the new jet-space. It is called
the intrinsic Lagrangian henceforth.

The $(n-1,2)$-form $\sigma$ is closed and hence is exact (recall that we restrict ourselves to local analysis). It is also conserved $\dh \sigma=0$ and hence $\dv$-closed so that it is $\dv$-exact i.e. $\sigma=\dv\chi$ for some $\chi$. It follows it can be written as $\sigma=d(\chi+l)$ where $l$ is an $(n,0)$ form. Indeed, $d_h\sigma=0=\dh\dv\chi=-\dv\dh\chi$ and hence there exist $(n,0)$-form $l$ such that $\dh\chi=-\dv l=-dl$, giving $\sigma=d(\chi+l)$. As we have already seen in the case where $\sigma$ originates from the Lagrangian $\cL$ one can simply take $l=\cL|_{\manM}$ so that $\chi+\cL|_\manM$ is a pull-back of the generalized Poincare-Cartan form to the equation manifold.

Before giving an invariant definition of the intrinsic Lagrangian it is instructive to present a coordinate expression. Using coordinates $x^a,\psi^A$ on $\manM$ introduce vertical components of $\chi,\sigma$ according to $\chi=\dv \psi^A \chi_A $ and $\sigma=\half \dv \psi^A \dv\psi^B \sigma_{AB}$.  Note that $\chi_A$ and $\sigma_{AB}$ are horizontal forms. Promoting all the coordinates $\psi^A$ to fields $\psi^A(x)$ the action associated to the intrinsic Lagrangian reads as
\begin{equation}
\label{n-action}
 S^C=\int  \derham \psi^A(x) \chi_A(x) -\cH(x)\,, \qquad \derham \equiv dx^b\d_b, \quad  \cH=\dh \psi^A\chi_A -l\,.
\end{equation} 
$\cH$ is an $(n,0)$-form on $\manM$ called the covariant Hamiltonian. It satisfies
\begin{equation}
\label{dvH}
 \dv \cH=\dv(\dh\psi^B\chi_B)+\dh(\dv\psi^A \chi_A)
 =-\dv\psi^A(\dh \psi^B \sigma_{BA}+\derham \chi_A)\,.
\end{equation}
This can also be written as $\dv \cH=i_{\dh} \sigma+\derham \chi$, where 
$i_{\dh}$ denotes an operation which substitutes $\dv\psi^A$ with $\dh\psi^A$, i.e. symbolically $i_{\dh}=\dh \psi^A\dl{\dv\psi^A}$. In this form it is clear that~\eqref{n-action} belongs
to the class of Lagrangians put forward in~\cite{Alkalaev:2013hta}, as discussed in more details
in Section~\bref{sec:brst-like}. Note that a construction of $l$ and $\cH$ starting from $\sigma$ as well as an alternative Lagrangian construction was put forward in~\cite{Bridges2009}.

%
%

Using~\eqref{dvH} it is easy to write down explicitly the component form of the equations of motion following from $S^C$:
\begin{equation}
 \left(\derham \psi^B(x)-(d_H \psi^B)(x)\right)\sigma_{BA}(x)=0\,,
\end{equation} 
where $A(x)$ denotes evaluation of a horizontal form $A(\psi,x,dx)$ at $\psi^A=\psi^A(x)$. In this form it is clear that these are consequences of the equations~\eqref{unfold} and hence
of the initial equations of motion.

In a more invariant language the intrinsic Lagrangian is an $(n,0)$-form on the new jet-space $J^\infty(\manM)$ given by
\begin{equation}
 \cL^{C}=\Hor(\pi^*(\chi+l))\,,
\end{equation} 
where $\pi^*$ is the pullback associated to the projection $\pi$ of $J^\infty(\manM)$ to $\manM$ \footnote{in coordinate terms $\pi$  sends a point with coordinates $x^a,\psi^A,\psi^A_a,\psi^A_{ab},\ldots$ to $x^a,\psi^A,0,0,\ldots$.}
and $\Hor$ is the so-called horizontalization map. It sends a form on the new jet bundle to its completely horizontal component i.e. it does not affect coefficients while on the basis differentials it is defiend as 
\begin{equation}
\Hor(dx^a)=dx^a, \qquad \Hor(d\psi^A_{ab\ldots})=\DH\psi^A_{ab\ldots}\,.
\end{equation} 
For any local form $\alpha$ on $J^\infty(\manM)$ we have the following property:
\begin{equation}
 \Hor(d\alpha)=\Hor((\DH+\DV)\alpha)=\Hor(\DH\alpha)=\DH\Hor(\alpha)\,.
 \end{equation} 
In particular, if instead of $\chi+l$ we take $\chi+l+d\alpha$ this results in $(\cL^C)^\prime=\cL^C+\DH \Hor(\pi^*(\alpha))$, i.e. in adding a total derivative. This in turn implies that the equivalence class of the intrinsic Lagrangian modulo total derivatives is determined by $\sigma$ and does not depend
on the choice of the potential $\chi+l$.

Consider as an example a system whose Lagrangian $L=L(x,\phi,\phi_a)$ is independent of second and higher-order derivatives and is such that its equations of motion do not impose algebraic constraints on the dependent variables $\phi^i$. This means that $\phi^i|_\manM$ remain independent and can be taken as part of the coordinates on $\manM$, which we keep denoting by $\phi^i$. The form $\chi$ is then given explicitly by
\begin{equation}
 \chi=(d\phi^i-dx^b\d^T_b \phi^i) \left(\ddl{L}{\phi^i_a}\right)\Big|_\manM  (dx)^{n-1}_a\,.
\end{equation}
Because it is written in terms of De Rham differentials the component expression of its pullback to the new jet-bundle is unchanged. The decomposition into the new horizontal and vertical parts reads as
\begin{equation}
 \pi^*(\chi)=(\DV\phi^i+\DH\phi_i-\dh\phi^i)\left(\ddl{L}{\phi^i_a}\right)\Big|_\manM  (dx)^{n-1}_a\,,
\end{equation}
so that 
\begin{equation}
\label{intr-lag-exp}
\cL^C[\psi]= \Hor(\chi+\cL\big|_\manM)=
(\DH\phi_i-\dh\phi^i)\left(\ddl{L}{\phi^i_a}\right)\Big|_\manM (dx)^{n-1}_a+\cL\big|_\manM \,. 
\end{equation}

\subsection{Interpretation of the intrinsic Lagrangian}

Although the intrinsic Lagrangian is defined on the jet-bundle with infinite amount of dependent coordinates it actually depends on only the finite amount of them. It is natural to treat all the dependent variables on which $\cL^C$ does not depend, as  pure gauge ones and hence to disregard them (e.g. gauge-fix). More formally, suppose that after a local and invertible change of coordinates on $\manM$ the coordinates $\psi^A$ split into two groups $\varphi^i$ and $w^\mu$ such  that 
\begin{equation}
\vddl{{}^{EL}\cL^C}{w^\mu}=0\,.
\end{equation} 
This says that transformations $\delta w^\mu=\epsilon^\mu$, where $\epsilon^\mu$ are arbitrary functions of $x^a$, are gauge symmetries of the action $\int \cL^C$. Such gauge symmetries are known as  Stueckelberg or algebraic or, simply, shift gauge symmetries. These are to be gauge fixed by e.g. setting $w^\mu=0$ in $\cL^C$. Eliminating all such variables results in the action that does not anymore have Stueckelberg gauge symmetries.

In the case at hand this can be performed as follows: the variation of the intrinsic action under $\delta \psi^A(x)=\epsilon^A$, where $\epsilon^A$ are generic functions in $x^a$, is given by
\begin{equation}
 \int (\derham \psi^B(x)-(\dh\psi^B)(x)) \sigma_{BA}(x) \epsilon^A\,.
\end{equation} 
It follows that if $R^A_\alpha(\psi)$ are zero vectors of $\sigma_{AB}$, i.e. $\sigma_{AB} R^B_\alpha=0$
then $\delta \psi^A=R^A_\alpha \epsilon^\alpha$ is a symmetry for arbitrary $\epsilon^\alpha(x)$. Suppose that
we have found all linearly independent vertical vector fields $R_\alpha$ on $\manM$ such that $i_{R_\alpha}\sigma=0$.
It follows $i_{\commut{R_\alpha}{R_\beta}}\sigma=0$ and hence the distribution determined by $R_\alpha$ is integrable.
As our analysis is local we can find new coordinates $w^\mu,\varphi^i$ such that $R_\alpha=R_\alpha^\mu(\varphi,w)\dl{w^\mu}$ with $R_\alpha^\mu$ invertible and hence one can use the gauge symmetries to set $w^\mu=0$ (or any other convenient value), giving a natural set of fields for the intrinsic Lagrangian.

Note that although at first glance the above argument deals with infinite dimensional manifold no subtleties may arise. Indeed, all the objects entering $\cL^C$ originate from finite jets and hence may only involve finite amount of coordinates. This means that as a first step one can safely disregards infinite amount of variables which are not involved at all to reduce the problem to a finite-dimensional one.

\subsection{Dependence on the choice of presymplectic structure}

As we have already seen the Lagrangian doesn't uniquely determine the presymplectic structure. The ambiguity is described by the following equivalence transformation:
\begin{equation}
 \sigma \to \sigma +\dv\dh \alpha
\end{equation} 
where the $(n-2,1)$-form $\alpha$ is generic. Modulo $d$-exact terms this results in
\begin{equation}
 \chi+l\quad  \to \quad \chi+l+\dh\alpha\,.
\end{equation} 
The respective variation of the intrinsic Lagrangian is given by 
\begin{equation}
\Hor(\pi^*(\dh\alpha)) = \dh\Hor(\alpha)=\dh((\DH-\dh)\psi^A \,\alpha_A)
\end{equation} 
An equivalent (modulo $\DH$-exact terms) representation of the variation can be obtained starting from
$\chi+l~\to~ \chi+l-\dv\alpha$.
This gives
\begin{equation}
 -\Hor(\pi^*(\dv\alpha)) =-((\DH-\dh)\psi^A))((\DH-\dh)\psi^B))(\dv\alpha)_{AB}\,.
\end{equation} 
This is not always a total derivative. In other words the intrinsic Lagrangian 
does depend on the choice of $\sigma$ representative. Note however, that for a given 
Lagrangian system the ambiguity in $\sigma$ can be substantially reduced by 
requiring $\sigma$ to have minimal derivative order. As we are going to see in 
the next section, by using a minimal first-order formulation one can completely 
fix the ambiguity in $\sigma$.

\subsection{The statement}
Now we are going to compare the starting point Lagrangian $\cL[\phi]$ and the constructed above intrinsic Lagrangian $\cL^C[\psi]$. It is natural to consider two Lagrangians equivalent if they can be made identical (modulo total derivatives) by local invertible field redefinitions. Moreover, if by
such a redefinition the Lagrangian can be equivalently rewritten as $L^\prime=L[u]+L_a(v)$ where variables $v^\alpha$ enter only undifferentiated and $\vddl{{}^{EL}L}{v^\alpha}$ can be solved algebraically with respect to $v$ then $L^\prime$ is equivalent to $L[u]$. The equivalence of Lagrangians is stronger then the equivalence of the respective Euler--Lagrange equations. In particular, two equivalent Lagrangians determine equivalent equation manifolds and moreover the respective presymplectic structures are equivalent.

In what follows we assume that the initial Lagrangian $\cL$ does not have 
algebraic gauge symmetries. Moreover, we restrict ourselves to a class 
of natural Lagrangian systems defined as follows: a Lagrangian system is called natural if its
action is equivalent to the one of the form
\begin{equation}
\label{1st-order}
 S=\int \cL^{first}[\varphi]=\int d^n x (V^a_i(\varphi,x)\d_a \varphi^i-H(\varphi,x))\,,
\end{equation} 
and such that its equations of motion do not imply algebraic constraints on the undifferentiated fields $\varphi^i$. More precisely, the jet-space coordinate functions $\varphi^i$ pulled back to the equation manifold remain independent. Note that as a local form on the jet-space $\cL^{first}[\varphi]$ can be written as $\cL^{first}[\varphi]=\dh\phi^iV^a_i(dx)^{n-1}_a-H(dx)^n$.


Most of the theories of fundamental interactions (Einstein gravity, Yang-Mills, 
massless higher spin fields etc.). This can be easily seen by inspecting the 
well-known frame-like Lagrangians of gravity and Yang-Mills. In the case of 
massless higher-spins frame-like Lagrangians were proposed 
in~\cite{Vasiliev:1980as,Lopatin:1987hz,Skvortsov:2008sh,Skvortsov:2010nh}.
In fact there is a deep relation between frame-like Lagrangians and presymplectic 
structures observed in~\cite{Alkalaev:2013hta} but it becomes manifest 
only in the BRST extended version of the construction discussed briefly in 
Section~\bref{sec:grav}. In mathematical literature Lagrangian system of the form~\eqref{1st-order} are known as multisymplectic and were studied in~\cite{Hydon:2005,Bridges2009}.

We have to stress, however, that not all physically interesting systems are 
natural. For instance massive spin-2 (as well as massive higher spins) does not 
belong to this class.\footnote{In the standard approach~\cite{Fierz:1939ix} this can be 
traced to the zeroth-order differential consequences of the Euler-Lagrange equations.}

\begin{prop}
Let $\cL[\varphi]$ be a Lagrangian of a natural system. There exist a representative $\sigma$ of the equivalence class of presymplectic structures determined by $\cL[\varphi]$, such that the associated intrinsic Lagrangian $\cL^C[\psi]$ is equivalent to $\cL[\varphi]$.
\end{prop}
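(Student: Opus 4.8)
The plan is to start from the assumed first-order presentation $\cL^{first}[\varphi]=\dh\varphi^i V^a_i\,(dx)^{n-1}_a-H\,(dx)^n$ and show that the intrinsic construction, applied to the canonical presymplectic structure of \emph{this} representative Lagrangian, reproduces exactly $\cL^{first}$ (and hence $\cL[\varphi]$ up to equivalence). Since equivalent Lagrangians yield equivalent presymplectic structures (as noted just before the statement), it costs nothing to replace $\cL$ by $\cL^{first}$ from the outset. First I would compute $\hat\chi$ and $\hat\sigma=\dv\hat\chi$ for $\cL^{first}$ using the explicit formula for $L=L(x,\varphi,\varphi_a)$ from the excerpt: here $\ddl{L}{\varphi^i_a}=V^a_i$, so $\hat\chi=\dv\varphi^i V^a_i\,(dx)^{n-1}_a$ and $\hat\sigma=\dv\varphi^i_a(\ldots)$. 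I then restrict to the equation manifold $\manM$. The key point, guaranteed by the ``natural'' hypothesis, is that the $\varphi^i$ remain functionally independent on $\manM$ and can be taken as (part of the) coordinates $\psi^A$ on $\manM$; I would take them, together with whatever remaining coordinates $w^\mu$ are needed, as the coordinate system in which to run the intrinsic construction. This is the representative $\sigma$ whose existence the proposition asserts: $\sigma=i^*_\manM\hat\sigma$ with $\hat\sigma$ built from $\cL^{first}$.

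Next I would apply the coordinate formula \eqref{intr-lag-exp}, which is precisely the computation already carried out in the excerpt for a Lagrangian $L(x,\phi,\phi_a)$ whose EL equations impose no algebraic constraints — exactly our situation. It gives
\begin{equation}
\cL^C[\psi]=(\DH\varphi^i-\dh\varphi^i)\,V^a_i\big|_\manM\,(dx)^{n-1}_a+\cL^{first}\big|_\manM\,.
\end{equation}
Now I expand $\cL^{first}\big|_\manM=\dh\varphi^i V^a_i\big|_\manM(dx)^{n-1}_a-H\big|_\manM(dx)^n$; the two $\dh\varphi^i V^a_i$ contributions cancel, leaving
\begin{equation}
\cL^C[\psi]=\DH\varphi^i\,V^a_i\big|_\manM\,(dx)^{n-1}_a-H\big|_\manM\,(dx)^n\,.
\end{equation}
Replacing $\DH$ by the de Rham differential $\derham$ on the new jet-bundle (they agree on the undifferentiated fields $\varphi^i(x)$, since $\DH\varphi^i=dx^a\varphi^i_a=\derham\varphi^i(x)$ once we pass to the field description), this is literally $\cL^{first}$ again, now written with the fields being the coordinates $\varphi^i(x)$ plus the extra coordinates $w^\mu(x)$.

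It then remains to deal with the $w^\mu$. By construction $V^a_i\big|_\manM$ and $H\big|_\manM$ are the pullbacks of functions on $\manM$; a priori they depend on $w^\mu$ as well as $\varphi^i$. Here I would invoke the interpretation subsection: the $w^\mu$ directions are (up to an integrable change of coordinates) exactly the null directions $R_\alpha$ of $\sigma_{AB}$, and along such directions $\cL^C$ carries a shift gauge symmetry $\delta w^\mu=\epsilon^\mu$, so $\vddl{{}^{EL}\cL^C}{w^\mu}=0$. Hence by the stated notion of Lagrangian equivalence one may gauge-fix $w^\mu$ to a convenient value, and after this elimination $\cL^C$ reduces to a Lagrangian in the $\varphi^i$ alone which coincides with $\cL^{first}[\varphi]$ up to total derivatives — and therefore is equivalent to the original $\cL[\varphi]$.

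\textbf{Main obstacle.} The routine part is the cancellation computation; the delicate point I expect to need care is the handshake between the coordinates: that the $V^a_i\big|_\manM$, $H\big|_\manM$ appearing after gauge-fixing the $w^\mu$ genuinely reproduce the original $V^a_i(\varphi,x)$, $H(\varphi,x)$ rather than some deformation thereof, and that the null directions of $\sigma_{AB}$ really exhaust the $w^\mu$ (no ``extra'' gauge directions mixing into the $\varphi^i$). This is where the two hypotheses — no algebraic gauge symmetries of $\cL$, and no algebraic constraints on $\varphi^i$ at $\manM$ — do the work: the first ensures $\sigma$ restricted to the $\varphi^i$-directions is what the first-order kinetic term $V^a_i$ dictates and its only degeneracies are along the auxiliary $w^\mu$; the second ensures $\varphi^i$ survive as honest coordinates so that the pullback is transparent. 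I would also check the boundary/total-derivative bookkeeping, since the whole argument is modulo $\dh$-exact and $\DH$-exact terms and one must confirm these are consistently tracked through the horizontalization map.
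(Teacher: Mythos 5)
Your proposal is correct and follows essentially the same route as the paper: reduce to the first-order representative $\cL^{first}$, take $\sigma=\dv\chi$ with $\chi=\dv\varphi^i V^a_i\,(dx)^{n-1}_a$ restricted to $\manM$, observe the cancellation in \eqref{intr-lag-exp}, and gauge away the remaining coordinates on which $\cL^C$ does not depend. The one "delicate point" you flag is in fact automatic: since $V^a_i$ and $H$ depend only on the undifferentiated $\varphi^i$ (which survive as coordinates on $\manM$ by the naturalness hypothesis), their pullbacks cannot acquire any $w^\mu$-dependence, so $\cL^C$ is manifestly independent of the $w^\mu$ and the Stueckelberg elimination is immediate.
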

\begin{proof}
Equivalent Lagrangian formulations result in equivalent presymplectic structures on the equation manifold so that without loss of generality let us assume that we start with the first order Lagrangian~\eqref{1st-order}. The respective presymplectic structure reads as
\begin{equation}
 \hat\chi= \dv \varphi^i \,V^a_i(\varphi) \, (dx)^{n-1}_a\,.
\end{equation} 
If by slight abuse of notations $\varphi^i$ restricted to $\manM$ are also denoted by $\varphi^i$, then in the coordinate system on $\manM$ such that $x^a,\phi^i$ are part of the coordinates one has
\begin{equation}
\chi=\hat\chi|_\manM=\dv\varphi^i V^a_i(\varphi)\,(dx)^{n-1}_a \,.
\end{equation} 
Furthermore, the intrinsic Lagrangian \eqref{intr-lag-exp} takes the form
\begin{equation}
 L^C= \left(V^a_i\d_a \varphi^i- H(\varphi)\right)(dx)^{n}\,,
\end{equation} 
which explicitly coincides with the starting point first order Lagrangian~~\eqref{1st-order} provided one disregards all the dependent variables besides $\varphi^i$. Recall that according to our interpretation of the intrinsic Lagrangian all the variables of which it's independent, are to be gauged away.
\end{proof}

Let us note that in the above argument the representative $\sigma=\dv\chi$ of the presymplectic structure is quite distinguished. Indeed, the derivative order of $\hat\sigma$ is zero (only undifferentiated $\varphi^i$ enter). Any distinct  representative $\sigma^\prime=\sigma+\dv\dh \alpha$ necessarily involves derivatives of $\varphi$.

\subsection{Symmetries and conservation laws}

By definition a variational symmetry is a vertical evolutionary vector field $\hat V$ on $\manJ$ preserving the Lagrangian modulo a total differential, i.e. $\hat V x^a=0$, $\commut{\hat V}{\dh}=0$, and $\hat V\cL=\dh K$ for some $K$. It is clear that $\hat V$
is tangent to $\manM$ and hence determines an evolutionary vertical vector field $V=\hat V|_\manM$ on $\manM$. This preserves the presymplectic structure up to an equivalence. To see this let us rewrite explicitly the definition of $\chi$ using the Euler operator $\delta^E$:
\begin{equation}
 \dv\cL=\delta^E\cL-\dh\chi \,.
\end{equation} 
Applying the Lie derivative $L_{\hat V}$ to both sides and using $L_{\hat V}\delta^E\cL=\delta^E i_{\hat V} \delta \cL+I i_{\hat V} \delta^E\delta^E \cL= \delta^E i_{\hat V} \delta^E \cL=0$ which holds thanks to 
$i_{\hat V} \delta \cL = \dh (...)$, one finds: 
\begin{equation}
 \dv \dh K=\dh L_{\hat V}\hat\chi \,, \quad \Rightarrow \quad L_{\hat V}(\hat\chi+\dv K)=\dh \hat\alpha
\end{equation} 
for some $\alpha$. Applying $\dv$ gives
\begin{equation}
L_{\hat V}\hat\sigma=\dv\dh\hat\alpha\quad\Rightarrow \quad  L_{V}\sigma=\dv\dh\alpha\,,
\end{equation}
where the second equation is obtained by restricting to the equation manifold. This means that a variational symmetry preserves the equivalence class of $\sigma$.

Suppose that a vertical vector field $V$ on $\manM$ is a symmetry preserving the equivalence class of $\sigma$. I.e. 
\begin{equation}
 \commut{\dh}{V}=0\,, \qquad L_V \sigma=\dv\dh\alpha\,.
\end{equation} 
The presymplectic structure determines a map from the compatible symmetries to conservation laws. More precisely, let us define $(n-1,0)$-form $H_V$ by
\begin{equation}
 \dv H_V=i_V\sigma-\dh\alpha\,,
\end{equation} 
which is consistent because $\dv(i_V\sigma-\dh\alpha)=L_V\sigma-\dv\dh\alpha=0$. We have
\begin{equation}
 \dv\dh H_V=-\dh \dv H_V=-\dh (i_V\sigma-\dh\alpha)=i_V \dh\sigma=0\,,
\end{equation} 
where we made use of $\commut{\dh}{i_V}=0$ which holds thanks to $V$ being
vertical and evolutionary (note also that $i_V \dh f=0$ for any local function 
$f$). Hence $\dh H_V$ depends on $x^a,dx^a$ only. It follows one can assume 
$H_V$ satisfies $\dh H_V=0$. Indeed, as we work locally any 
$\psi^A$-independent $\dh$-closed $(n-1,0)$-form $\beta$ can be represented as 
$\beta=\dh\gamma$ for some $\psi^A$-independent $\gamma$. Such defined $H_V$ is 
an on-shell conserved horizontal $(n-1)$-form called the Hamiltonian of $V$. In 
the case where $\sigma$ originates from a genuine Lagrangian the above map is 
just the one of the Noether theorem and is one-to-one after modding out the 
gauge symmetries. For generic $\sigma$ the map is still defined but in general 
is not one to one.~\footnote{While preparing this work for publication we 
received Ref.~\cite{Sharapov:2016qne}, where the map from symmetries to conservation determined by a generic presymplectic structure  is also discussed
in the context of not necessarily Lagrangian system.  In this context it is also worth mentioning the dual structure (multidimensional generalization of Poisson bracket of the Hamiltonian formalism) that maps conservation laws to symmetries, see 
~\cite{Kersten:2003fb,Krasil'shchik:2010ij} in the context of integrable systems 
and \cite{Kaparulin:2010ab,Kaparulin:2011xy} in the context of gauge theories.}

Finally, consider the variation of the intrinsic Lagrangian under $\delta \psi^A=V\psi^A$. It is given by (modulo total derivatives)
\begin{equation}
 \delta \cL^C=(D_H \psi^B-\dh\psi^B)\sigma_{BA}V^A\,.
\end{equation} 
Taking into account that $V^A\sigma_{AB}=\d_B H_V+(\dh\alpha)_B$ one finds
\begin{equation}
 \delta \cL^C=D_H H_V-\dh H_V+(D_H \psi^B-\dh\psi^B)(\dh\alpha)_B\,.
\end{equation} 
The first term is a total derivative. The second one vanishes provided a proper choice of $H_V$. However, the third one is in general nonzero. For natural systems and properly chosen $\sigma$,  $V$ still determines a symmetry but its action of $\psi^A$ has to be modified. Note also that if $V$ strictly preserves $\sigma$ then $\delta \cL^C=D_H H_V$.

%
%
%
%
%

\subsection{Relation to parent action}

The intrinsic Lagrangian can be systematically derived from the so-called parent Lagrangian formulation~\cite{Grigoriev:2010ic,Grigoriev:2012xg}. To illustrate the relationship let us work in the simplified setting where $L=L(\phi,\phi_a,\phi_{ab})$ and no explicit $x^a$-dependence is allowed. 

Given a system with Lagrangian~$L=L(\phi,\phi_a,\phi_{ab})$ the respective parent action~\cite{Grigoriev:2010ic} reads as~\footnote{The first-order actions of this structure in 1 dimension (mechanics) are well-known, see e.g.~\cite{Gitman:1990qh}.}
\begin{equation}
\label{scalar-parent}
 S^P=\int dx^n \left(L(\phi,\phi_a,\phi_{ab})+ \pi^a(\d_a \phi - \phi_a)+\pi^{ac}(\d_a \phi_c - \phi_{ac}) +\ldots\right)\,.
\end{equation} 
where $\ldots$ denote further terms of the similar structure involving $\pi^{abc}$ etc. and all variables $\pi^{ab\ldots}$ are assumed totally symmetric. Its equations of motion read as
\begin{equation}
\label{parEOM}
\begin{gathered}
\ddl{L}{\phi}-\d_a\pi^a=0\,,\\
\pi^a-\ddl{L}{\phi_a}+\d_c\pi^{ca}=0\,,\qquad 
\pi^{ab}-\ddl{L}{\phi_{ab}}=0\,, \qquad \pi^{ab\ldots}=0\\
\phi_a=\d_a\phi\,, \qquad \phi_{ab}=\d_{(a}\phi_{b)}\,, \qquad \ldots\\
\end{gathered}
\end{equation}
It is easy to see that the Euler-Lagrange equations 
\begin{equation}
\label{EL-p}
 \ddl{L}{\phi}-\d^T_a\ddl{L}{\phi_a}+\d^T_c\d^T_a \ddl{L}{\phi_{ca}}=0\,.
\end{equation} 
determined by $L$ are consequences of~\eqref{parEOM}.

Considering the manifold $\bar\manM$ of independent variables $x^a$ and dependent variables of the parent system in place of $\manM$ one finds that the parent action can be written as
\begin{equation}
 S^P=\int (\derham\Psi^M \bar\chi_M -\bar\cH)\,,
\end{equation} 
where
\begin{equation}
 \bar\chi=(\pi^a \dv\phi+\pi^{ab}\dv\phi_b+\ldots)(dx)^{n-1}_a\,,
\end{equation} 
\begin{equation}
 \bar \cH\equiv
 \dh\Psi^A\chi_A-L\,\,(dx)^n=(\pi^a\phi_a+\pi^{ab}\phi_{ab}+\ldots-L(\phi,\phi_a,\phi_{ab}))(dx)^n\,.
\end{equation} 
and $\Psi^M$ denote all the dependent variables $\phi,\phi_a,\ldots$ and $\pi^a,\pi^{ab},\ldots$.  Note that only 
$\dh\pi^{a\ldots}$ do not actually enter the expressions while $\dh\phi_{\ldots}$ is defined as a usual horizontal differential on the jet-space of $\phi$, i.e. $\dh \phi_{\ldots}=dx^a\d^T_a\phi_{\ldots} =dx^a\phi_{a\ldots}$.

Consider the following submanifold  $\manM$ of  $\bar\manM$
\begin{equation}
\label{const}
\begin{gathered}
 \pi^a-\ddl{L}{\phi^a}+\d^T_c\ddl{L}{\phi_{ca}}=0\,,\qquad 
 \pi^{ab}-\ddl{L}{\phi_{ab}}=0\,, \quad
 \pi^{ab\ldots}=0\,,\\
 \d^T_{a_1}\d^T_{a_2}\ldots(EL)=0\,,
\end{gathered}
\end{equation} 
where $EL$ denotes EL equation~\eqref{EL-p}. The above constraints are (differential) consequences of the parent action equations of motion~\eqref{parEOM}. The submanifold they single out can be identified with the equation manifold $\manM$. Indeed, the last equation determines the equation manifold as a submanifold in jets-space (if one identifies coordinates $\phi,\phi_a,\ldots$ as those of the jet-space) while the first one puts $\pi^{\ldots}$ variables to the particular values.
%

Moreover, it is easy to check directly that the pullback of $\bar\chi$ and $\cH$ to $\manM$ explicitly coincides 
with the presymplectic potential $\chi$ and the covariant Hamiltonian $\cH$ determined by the Lagrangian
$L$ on its own equation manifold. This gives an alternative way to arrive at these structures.
Furthermore, under certain assumptions one can actually derive the intrinsic Lagrangian by eliminating
the auxiliary fields in the parent action. Let us also note that for gauge theories the parent action naturally extends~\cite{Grigoriev:2010ic,Grigoriev:2012xg} to the BV-BRST framework so that it can be used to derive a version of intrinsic Lagrangians whose gauge invariance is realized manifestly. In so doing the appropriate version of the  presymplectic form $\sigma$ originates from the odd symplectic structure of the parent BV formulation. It turns out that at least for usual gauge theories (gravity, YM theory, massless higher spins) such intrinsic Lagrangians  coincide with the familiar first-order frame like Lagrangians. 

%
%

\subsection{BRST-like description}
\label{sec:brst-like}

Till now we used the standard language of vertical and horizontal forms. It is instructive to reformulate the construction in the BRST-like language and to make contact with the presymplectic AKSZ models proposed in~\cite{Alkalaev:2013hta}. 

To this end we promote $dx^a$ to Grassmann odd ghost coordinates $\xi^a$. Both 
$x^a,\xi^a$ coordinates are then regarded as horizontal. In so doing a usual 
$(k,l)$-form becomes a vertical $l$-form which carries ghost degree $k$ while 
the horizontal differential $\dh$ becomes an odd nilpotent vector field 
$Q=\xi^a\d^T_{a}$ which acts on forms by the Lie derivative.  To simplify the 
exposition we assume that all the basic objects do not depend explicitly on the 
space-time coordinates~$x^a$.

In these terms the presymplectic structure is a vertical $2$-form of ghost
degree $n-1$ satisfying: 
\begin{equation}
\label{ldsigma}
 \dv\sigma=0\,, \qquad L_Q\sigma=0\,.
\end{equation} 
The definition of the covariant Hamiltonian $\cH$ takes the form
\begin{equation}
 i_Q\sigma=\dv \cH\,.
\end{equation} 
Using $\psi^A,x^a,\xi^a$ as coordinates on ghost-extended $\manM$ the above formula can be written as
$Q^A\sigma_{AB}=-\d_B\cH$. This can be solved in terms of the potential $\chi=\chi_A \dv\psi^A$ for $\sigma$ as 
\begin{equation}
\cH=i_Q\chi-l\,,  
\end{equation} 
where $l$ is a ghost-degree $n$ function. It is easy to check that this is the same $l$ as in~\eqref{n-action}.
In the case where $\sigma$ is determined by a Lagrangian $\cL$ one can take as $l$ the restriction of $\cL$
to the equation manifold.

Finally the expression for the intrinsic action takes the form
\begin{equation}
\int \derham \psi^A\chi_A -\cH
\end{equation} 
where $\psi^A$ is promoted to $\psi^A(x)$ while $\xi^a$ to $dx^a$.  If one also regards $x^a$ as another field, set to its background value, this action can be seen as that
of the presymplectic AKSZ sigma model~\cite{Alkalaev:2013hta} whose target space is $\manM$ extended by ghosts $\xi^a$. The only subtlety is that fields associated to coordinates $\xi^a,x^a$ are interpreted as background fields.~\footnote{These fields can be considered at the equal footing with others by considering the parameterized version of the same system. This also gives another (probably more 
fundamental) way to arrive at the BRST-like description. Parameterized systems in the presymplectic framework were discussed in~\cite{Alkalaev:2013hta}.}

\section{Examples}
\subsection{Pseudo 2nd order Lagrangian}
Let us consider a standard Klein--Gordon Lagrangian but written as $L=-\half\phi\eta^{ab}\phi_{ab}$ where $\eta^{ab}$ is the inverse metric. We have
(keep using $\phi,\phi_a$ to denote respective coordinates on the stationary surface)
\begin{equation}
 \chi=\left((\ddl{L}{\phi^a}-\d^T_c\ddl{L}{\phi_{ca}})\dv\phi+\ddl{L}{\phi_{ab}} \dv\phi_b \right)(dx)^{n-1}_a=\half(\phi^a \dv\phi-\phi\, \dv\phi^a)(dx)^{n-1}_a \,,
\end{equation} 
\begin{equation}
 \cH=\half(\phi^a\phi_a-\phi\eta^{ab}\phi_{ab}+\phi\eta^{ab}\phi_{ab})(dx)=\half\phi^a\phi_a(dx)^n\,.
\end{equation} 
The intrinsic action takes the form
\begin{equation}
 \half\int d^nx (\phi^a \d_a \phi- \phi \d_a \phi^a+\phi^a\phi_a) \,,
\end{equation} 
and indeed differs from a standard first-order action $\int(\phi^a \d_a \phi-\half \phi^a\phi_a)$
by a total derivative. Note that had we started with the usual Lagrangian $\half\phi_a\phi^a$
we would have arrived at the standard first order action.

\subsection{Polywave equation}

The simplest genuine higher derivative example is $L=\half\Box\phi \Box 
\phi=\half\phi_{aa} \phi_{bb}$ (here and below $\phi_{aa}= \eta^{ab}\phi_{ab}$ 
and as before we use $\phi,\phi_{a},\phi_{ab},\phi_{abc}$ as part of the 
coordinate system on the stationary surface). 
One has
\begin{equation}
 \chi=\left((\ddl{L}{\phi^a}-\d^T_d\ddl{L}{\phi_{da}})\dv\phi+\ddl{L}{\phi_{ab}} d\phi_b \right)(dx)^{n-1}_a=(-\phi_{acc}d\phi+\phi_{cc}  \dv\phi_a)(dx)^{n-1}_a 
\end{equation} 
and 
\begin{equation}
 \cH=(-\phi_{acc}\phi_a+\phi_{cc}\phi_{aa}-\half \phi_{cc}\phi_{aa})(dx)^n=(-\phi_{acc}\phi_a+\half \phi_{cc}\phi_{aa})(dx)^n\,.
\end{equation} 
The intrinsic action takes the form
\begin{equation}
\label{4ord}
 \int d^nx (-\phi_{acc}(\d_a \phi -\phi_a)+\phi_{cc}\d_a \phi_a-\half \phi_{aa} \phi_{cc})\,.
\end{equation} 
Note that the action depends on only the following variables $\phi,\phi_a,\phi_{cc},\phi_{acc}$ but not on the traceless component of $\phi_{ab}$
and $\phi_{abc}$. It is easy to check that this action is equivalent to the starting point one: indeed, varying with respect to $\phi_a$ and $\phi_{acc}$ gives $\phi_a=\d_a\phi$ and $\phi_{acc}=\d_a\phi_{cc}$ so that these equations can be algebraically solved for $\phi_a,\phi_{acc}$. Substituting the solution back to the action gives
\begin{equation}
 \int d^nx (\phi_{cc}\d_a \d_a\phi -\half \phi_{aa} \phi_{cc})\,.
\end{equation} 
Next, varying w.r.t. $\phi_{aa}$ gives $\phi_{aa}=\d_a\d_a\phi$. Substituting this
into the above action gives the starting point action.

The above example gives a nice illustration of how the intrinsic Lagrangian construction automatically selects a set of auxiliary fields required for the
minimal first-order formulation. More precisely the set of field consist of those coordinates the stationary surface on which the intrinsic Lagrangian actually depends (so that they survive the elimination of the pure gauge variables).

\subsection{YM theory}
The YM field is $A^a$ that takes values in a Lie algebra $\algg$ equipped with an invariant inner product $\inner{}{}$. We will use notation
$A^a_{b_l\ldots b_l}$ for $\d^T_{b_1}\ldots \d^T_{b_l} A^a$. The Lagrangian is given by (invariant summation over the repeated indices is assumed
)
\begin{equation}
 L=\frac{1}{4} \inner{F_{ab}}{ F_{ab}}(dx)^n\,, \qquad F_{ab}:=A^b_a-A^a_b+\commut{A^a}{A^b}\,.
\end{equation} 
Because $A^a_b$ are unconstrained by the equations of motion we
use $x^a,A^a,F_{ab},S_{ab}:=A^b_a+A^a_b$ restricted to the stationary surface
as part of the coordinate system therein.

The one form $\chi$ and the covariant Hamiltonian are given by
\begin{equation}
 \chi=\ddl{L}{A^b_a}\dv A^b(dx)^{n-1}_a=\inner{F_{ab}}{\dv A^b}(dx)^{n-1}_a \,,
\end{equation} 
 \begin{equation}
\cH=(\ddl{L}{A^b_a}A^b_a - \frac{1}{4} \inner{F_{ab}}{F_{ab}})(dx)=
 \half \inner{F_{ab}}{\half F_{ab}-\commut{A^a}{A^b}}                                                    \,. \end{equation} 
The intrinsic action takes the following form 
\begin{multline}
 \int \half \inner{F_{ab}}{\d_a A^b-\d_b A^a}-\half \inner{F_{ab}}{\half F_{ab}-\commut{A^a}{A^b}}
 =\\=
 \int \half \inner{F_{ab}}{\d_a A^b-\d_b A^a+\commut{A^a}{A^b}-\half  F_{ab}}\,,
\end{multline} 
and is clearly equivalent to the starting point action through the elimination of $F_{ab}$ by its own equations of motion. This is just the familiar first-order form of the YM action.

\section{Towards BRST extension: example of gravity}
\label{sec:grav}

Although all the above discussion applies to systems with gauge symmetries the 
gauge invariance was not explicitly taken into account. This can be 
systematically done using the BRST or more precisely (a generalization of the)  
Batalin--Vilkovisky formalism through the introduction of ghost variables and 
antifields. Here we only need a minimal set of structures.

Suppose that the PDE under consideration possesses gauge symmetries, i.e. a 
family of symmetries whose parameters are arbitrary functions of $x^a$. To 
describe gauge systems it is convenient to extend the set of dependent 
variables by ghosts $c^\alpha$ which are gauge parameters with the flipped 
Grassmann parity. We restrict ourselves to the case of irreducible gauge 
symmetries and hence
ghosts-for-ghosts are not present. It is also convenient to introduce a degree, 
called ghost degree, such  that $\gh{c^\alpha}=1$ while 
$\gh{x^a}=\gh{\phi^i}=0$. The jet-space is extended  to incorporate ghosts and 
their space-time derivatives $c^\alpha_{ab\ldots}$. The gauge transformations 
are encoded in the BRST differential $\hat\gamma$, which is an odd ghost degree 
$1$ vertical evolutionary vector field on the extended jet-space. $\hat\gamma$ 
is assumed to preserve the equation manifold $\manM$ and hence determines 
symmetry of the equation. That $\hat\gamma$ incorporates a compatible set of 
gauge symmetries, is encoded in the extra condition that $\gamma^2|_{\manM}=0$.

Suppose that the system is variational and let $\cL$ be the respective 
Lagrangian.
Gauge symmetries encoded in $\hat\gamma$ are said Lagrangian if
$\hat\gamma \cL=\dh j$ for some $(n-1,0)$-form $j$. Note that it can be natural 
to relax this condition to include symmetries  equivalent  to Lagrangian ones
(two symmetries are equivalent if they coincide on the stationary surface).
It is clear that  Lagrangian symmetries are automatically PDE symmetries. 

It is easy to read off gauge transformation of $\phi^i$ from $\hat\gamma$: 
namely if $\epsilon^\alpha$ are gauge parameters then 
\begin{equation}
 \delta_{\epsilon}\phi^i=\hat\gamma \phi^i|_{c^\alpha \to \epsilon^\alpha}\,.
\end{equation} 
That commutator of two gauge transformations is again a gauge transformation (on 
the stationary surface) is encoded in $\hat\gamma^2|_\manM=0$.

It is useful to consider a ghost-extended equation manifold $\cE$, which is $\manM$, extended by the ghost variables and their derivatives, and equipped with $\gamma=\hat\gamma|_{\cE}$. If we denote by $x^a,\psi^A,C^I$ the coordinates on $\cE$, where $C^I$ stand for all jet-space coordinate associated with the ghosts (i.e. $c^\alpha,c^\alpha_a,\ldots$) then $\gamma$ has the following form
\begin{equation}
 \gamma=C^I R_I^A(\psi)\dl{\psi^A}-\half C^I C^J U_{IJ}^K(\psi)\dl{C^K}\,.
\end{equation} 
It is clear that, thanks to $\gamma^2=0$, the vector fields $R_I$ determine an 
integrable distribution (called gauge distribution) on $\manM$ compatible with 
the Cartan distribution determined by $\dh$. To summarize, the ghost extended 
equation manifold is equipped with horizontal differential $\dh$ (which now also 
acts on ghosts) and the gauge differential $\gamma$ satisfying 
\begin{equation}
 \dh^2=0\,, \quad \dh\gamma+\gamma\dh=0\,, \quad \gamma^2=0\,.
\end{equation} 

In a direct analogy with the usual case discussed in 
Section~\bref{sec:intr-embedd}, given a ghost extended equation manifold 
$(\cE,\dh,\gamma)$ one can construct a natural realization of this gauge PDE in 
the intrinsic terms of $\cE$. More precisely, one promotes each coordinate 
on $\cE$ (besides $x^a,\xi^a$) to a field depending on $x^a$ which is a differential form whose degree is a ghost degree
of the coordinate. In our case $\psi^A$ give rise to the $0$-forms $\psi^A(x)$ while $C^I$ to 1-forms $A^I=dx^a A_a^I$. Introducing collective notation $\Psi^M$ for coordinates $\psi^A,C^I$ and $\Psi^M(x)$ for the associated fields
$\psi^A(x)$ and $A_a^I(x)dx^a$ the analog of the equations~\eqref{unfold-f} now reads as
\begin{equation}
 \derham \Psi^M(x)-((\dh -\gamma )\Psi^M)(x)=0\,.
\end{equation} 
Now $F(x)$, where $F=F(\psi,C,x,dx)$ is a horizontal form, denotes $F$ evaluated at  $\psi^A=\psi^A(x)$ and $C^I=A_a^I(x)dx^a$. The gauge symmetries of these equations are also determined by the differential $\dh-\gamma$ and can be written as
\begin{equation}
\delta\psi^A=-\epsilon^J(x)\ddl{(\gamma\psi^A)}{C^J}(x)\,,\qquad
\delta\ A^I=\derham \epsilon^I(x)
+\epsilon^J(x)\ddl{((\dh-\gamma)C^I)}{C^J}(x)\,.
\end{equation} 
More structural and uniform description is achieved in terms of the full-scale BRST-BV formalism and can be found in~\cite{Barnich:2010sw} (see also \cite{Grigoriev:2010ic,Grigoriev:2012xg}) where the above formulation was proved equivalent to the starting point one. 

An important property of the parent formulation is that contractible pairs for the differential $\gamma$ on $\cE$, which are by definition coordinates $w^a,v^a$ such that the equations $\gamma w^a=0$, $w^a=0$ are equivalent to $v^a=V^a(\text{remaining coordinates})$, correspond to the so-called generalized auxiliary fields which comprise usual auxiliary fields and pure gauge (Stueckelberg) variables. Addition/elimination of such variables leads to an equivalent realization of the gauge system. 

Let us concentrate on the case of diffeomorphism-invariant theories. Under rather general assumptions one can prove that by eliminating generalized auxiliary fields the system can be reformulated in such a way that $\dh$ disappears from the equations of motion and gauge symmetries. More precisely, the system becomes an AKSZ sigma-model whose target space is the ghost-extended equation manifold $\cE$ (but with coordinates $x^a$ eliminated) equipped with the differential $\gamma$ (see e.g.~\cite{Barnich:2010sw} and references therein for more details). In particular the equations of motion take the form
\begin{equation}
 \derham \Psi^M(x)+(\gamma \Psi^M)(x)=0\,.
\end{equation}
Formulation of a given gauge system in this form is also known as an unfolded one~\cite{Vasiliev:1980as,Lopatin:1987hz,Vasiliev:2005zu}. Strictly speaking, in the unfolded approach one typically deals with minimal (i.e. where maximal amount of the variables has been already eliminated) formulations of the above form. Moreover, in the general AKSZ formulation the variables of negative degree are present among $\Psi^M$, resulting in zeroth-order equations (constraints) among the equations of motion. Note that both approaches were developed independently from the quite different perspectives. Their relationship was described in~\cite{Barnich:2005ru}.

The AKSZ formulation is quite distinguished because it automatically contains the BRST formulation of the system. More precisely, let us promote a coordinate $\Psi^A$ to a collection of space-time forms of all degrees according to $\Psi^M=\st{0}{\Psi}{}^M(x)+\st{1}{\Psi}{}_a^M(x)dx^a+\ldots$ and set $\gh{\st{k}{\Psi}{}^M_{a_1\ldots a_k}}=\gh{\Psi^A}-k$ and the respective Grassmann parity.  It turns out that the introduced above ghost-degree $0$ component is precisely the $\gh{\Psi^A}$-form component while other components are identified as the ghost fields and the antifields needed for the BRST formulation of the system. The complete BRST differential is then determined as
\begin{equation}
 s\Psi^M(x,dx)=\derham \Psi^A(x,dx)+(\gamma \Psi^M)(x,dx)\,.
\end{equation} 
and is nilpotent by construction. Here $(\gamma\Psi^M)(x,dx)$ stands for local function $\gamma\Psi^M$ evaluated at $\Psi^M=\st{0}{\Psi}{}^M(x)+\st{1}{\Psi}{}_a^M(x)dx^a+\ldots$.

We now consider the example of Einstein gravity. In this case it is  known~\cite{Brandt:1996mh} that upon eliminating maximal amount of contractible pairs of $\gamma$ the reduced ghost-extended equation manifold $\tilde\cE$ can be coordinatized by
\begin{equation}
 e^a,~~\omega^{ab}\,, \qquad W^{ab}_{cd},~~W^{ab}_{cd;c_1},~~\ldots ~~W^{ab}_{cd;c_1\ldots c_l}, ~~\ldots
\end{equation} 
where the first group of variables have ghost-degree $1$ and the second $0$. Variables $e^a,\omega^{ab}$ originate from the diffeomorphism ghost and its antisymmetrized derivatives while $W$-variables can be related to the Weyl tensor and its algebraically-independent covariant derivatives. Note that $W^{ab}_{cd;\ldots}$ variables can be chosen totally
traceless. 

Upon the elimination of contractible pairs $\gamma$-differential on $\cE$
determines a reduced differential  $Q$ on $\tilde\cE$. Its explicit form is not known in general but it is easy to find how it acts on ghosts:~\footnote{\label{foot}More precisely, if one starts with the ghost-extended jet-space of gravity, then it is easy to eliminate contractible pairs for $\hat\gamma$. This results in the reduced jet-space equipped with the reduced differential $\hat Q$ whose structure is known explicitly (see e.g.~\cite{Brandt:1996mh}). The reduced differential on the equation manifold is then obtained by restricting $\hat Q$ to the equation manifold. Because equations impose no constraints on the ghosts and imply that Riemann tensor equals the Weyl tensor one immediately arrives at~\eqref{qgravred}. To find how $Q$ acts on covariant derivatives one needs to use the equations of motion to explicitly express the restriction of
$\hat Q$ in terms of the coordinates on the equation manifold. }
\begin{equation}
\label{qgravred}
 Q e^a=\omega^a{}_c\, e^c\,, \qquad Q\omega^{ab}=\omega^{a}{}_c \,\omega^{cb}+e^ce^d W^{ab}_{cd}\,, \qquad \ldots\,,
\end{equation}
The variables $e^a,\omega^{ab}$, and $W^{ab}_{cd;\ldots}$ provide a minimal formulation of the on-shell BRST complex for gravity and are known as generalized connection and tensor fields.
This concept is applicable to a general gauge theory and was put forward in~\cite{Brandt:1996mh,Brandt:1997iu}. Note that the supermanifold $\tilde\cE$ of this variables equipped with $Q$ data encodes all the information of the initial gauge theory. Indeed, as was shown in~\cite{Barnich:2010sw}, taking $\tilde\cE$ as a target space of the AKSZ sigma-model gives an equivalent formulation of the initial system so that the system
is reconstructed. This model is precisely the minimal unfolded formulation. 
Analogous considerations apply to nearly generic gauge theory.

Given $\tilde\cE$ equipped with $Q$ let us look for a compatible presymplectic structure, which in is this case is a 2-form $\sigma$ of ghost-degree $n-1$ satisfying (cf.~\eqref{ldsigma}):
\begin{equation}
 d\sigma=0 \,, \qquad \qquad  L_Q\sigma=0\,.
\end{equation} 
The respective presymplectic potential reads as~(this was proposed in~\cite{Alkalaev:2013hta})
\begin{equation}
\label{presympot}
\chi = d\omega^{ab} (e)^{n-2}_{ab}
\,,
\qquad \sigma=d\omega^{ab}de^c (e)^{n-3}_{abc}\,, 
\end{equation}
where
\begin{equation}
(e)^{n-k}_{a_1\ldots a_k}\equiv \frac{1}{(n-k)!}\epsilon_{a_1\ldots a_k c_1 \ldots c_{n-k}}
e^{c_1}\ldots e^{c_{n-k}}\,.
\end{equation}
It follows from the $o(n-1,1)$ invariance of $\epsilon_{a_1\ldots a_n}$ that $L_Q\chi=0$. The only subtle point in checking this is to observe that the Weyl tensor appearing in $Q\omega^{ab}$ does not contribute
because only its trace $W^{ca}_{cb}=0$ enters $L_Q\chi$. The covariant Hamiltonian is defined through $d\cH=-i_Q\sigma$ (we change sign for the sake of convenience) and is given by $\cH = i_Q\chi=\omega^{a}{}_c \,\omega^{cb}(e)^{n-2}_{ab}$. Promoting $e^a,\omega^{ab}$ to 1-form fields $e^a_\mu(x) dx^\mu,\omega^{ab}_\mu(x) dx^\mu$ the intrinsic action has the form of a presymplectic AKSZ model (see~\cite{Alkalaev:2013hta} for more details)
\begin{equation}
S^C=\int \derham \psi^A(x)\chi_A(x)-\cH=\int( \derham\omega^{ab}+\omega^{a}{}_c\omega^{cb})(e)^{n-2}_{ab}\,,
\end{equation}
and is just the usual gravity action in the frame-like formulation. Note that the action is explicitly independent of $W$-variables and hence these are to be disregarded. The above considerations easily extends to the case of nonvanishing cosmological constant.

The above construction is a slightly improved version of that from~\cite{Alkalaev:2013hta}. The important difference, however, is that the frame-like formulation is systematically constructed starting from the ghost-extended equation manifold. Nearly all the examples from~\cite{Alkalaev:2013hta} can easily be reformulated in the same way.

\section{Conclusions}

As concluding remarks let us discuss open problems and further perspectives.
First of all, a conceptual drawback of the proposed construction is the lack of an invariant characterization of a class of natural Lagrangian systems. We have only succeeded to characterize them implicitly as those systems whose Lagrangian can be brought to the specific first-order form by the local field redefinition and eliminating/adding auxiliary fields and/or pure gauge variables.

As we have seen for natural systems the Lagrangian formulation is encoded in the compatible
presymplectic structure on the equation manifold. The question is then how the Lagrangian formulation can be encoded in the intrinsic geometry of the equation manifold in the general case.

Given a compatible presymplectic structure which does not necessarily originate
from a Lagrangian (e.g. in the case where the Lagrangian is not known or does not exist) the intrinsic Lagrangian can still be used to perform (at least formally) a path-integral quantization of the system. In so doing the remaining equations of motion (those that do not follow from the intrinsic Lagrangian) are to be imposed as constraints. The idea to use a compatible presymplectic structure as a substitute of Lagrangian was also discussed recently in~\cite{Sharapov:2016qne}.


Finally, let us mention that the formalism developed in this work is closely related
to the de Donder--Weyl covariant Hamiltonian formalism (see,~\textit{e.g.},~\cite{Gotay:1997eg,Kanatchikov:1997wp,Kanatchikov:2000jz}). For instance,
in the simplest cases the covariant Hamiltonian $\cH$ coincides with the one of the de--Donder Weyl approach (see the respective discussion in~\cite{Alkalaev:2013hta}). In spite of this similarity, the detailed relationship is not known in the general case. 

\section*{Acknowledgments}
\addcontentsline{toc}{section}{Acknowledgments}

A substantial part of this work has originated from discussions with A.~Verbovetsky whom I wish to thank for his collaboration. In particular, he proposed the expression for the intrinsic Lagrangian as the horizontal component of the potential for the presymplectic form. I am also grateful to K.~Alkalaev and G.~Barnich for their useful exchanges. This work is supported by Russian Science Foundation grant 14-42-00047.

\addtolength{\baselineskip}{-3pt}
\addtolength{\parskip}{-3pt}

\providecommand{\href}[2]{#2}\begingroup\raggedright\endgroup

 \end{document}